\documentclass[12pt]{article}
\usepackage{enumerate}
\usepackage{amsfonts}
\usepackage{latexsym}
\usepackage{color}
\usepackage{graphicx}
\usepackage{wrapfig}
\usepackage{caption}

\usepackage{amsmath,amssymb,amsthm}

\theoremstyle{plain}
\newtheorem{theorem}{Theorem}
\newtheorem{lemma}[theorem]{Lemma}
\newtheorem{proposition}[theorem]{Proposition}
\newtheorem{corollary}[theorem]{Corollary}

\theoremstyle{definition}

\newtheorem{remark}[theorem]{Remark}

\newcommand{\HH}{\mathcal H}
\newcommand{\St}{\mathfrak S}
\newcommand{\Tr}{\operatorname{Tr}}
\newcommand{\B}{\mathfrak{B}}
\newcommand{\T}{\mathfrak{T}}
\newcommand{\PP}{\mathfrak{P}}
\newcommand{\Bsa}{\mathfrak{B}_{\mathrm{sa}}}
\newcommand{\supp}{\operatorname{supp}}

\newcommand{\rank}{\operatorname{rank}}

\newcommand{\ra}{\rightarrow}
\newcommand{\ext}{\mathrm{ext}}
\begin{document}
\title{On supporting affine functionals for Entanglement of Formation}
\author{A.S.Holevo, M.E.~Shirokov\\
Steklov Mathematical Institute, Moscow, Russia}
\date{}
\maketitle

\maketitle

\begin{abstract}

In several articles, the authors assume that the convex roof structure of the EoF and finite-dimensionality of subsystems $A$ and $B$ guarantee the existence of
the (global) supporting affine functional for the EoF at any state of the system $AB$. This means that for any state $\rho$ of $AB$ there is a Hermitian operator $\Lambda_\rho$ on $\mathcal{H}_{AB}=\mathcal{H}_A\otimes\mathcal{H}_B$
such that $E_F(\rho)=\mathrm{Tr}\Lambda_\rho\rho$ and $E_F(\sigma)\geq\mathrm{Tr}\Lambda_\rho\sigma$ for any state $\sigma$ of $AB$. We present an explicit example showing that, when $\rho$ is degenerate, this is not true even in the simplest case when $A$ and $B$ are qubit systems. The construction is based on the fact that the existence of a supporting affine functional for the EoF at a state $\rho$  is equivalent to the Lipschitz lower  semicontinuity of the EoF at this state $\rho$.  We use  Wootters' formula and the help of Claude Fable 5 to find a state $\rho$ of the system $AB$ for which the latter property does not hold.

We also describe  conditions for the existence the local and global supporting affine functionals for the EoF at a given state of both finite and infinite-dimensional bipartite quantum systems. These conditions allow us to find   Lipschitz lower semicontinuity bounds for the EoF at a given finite rank state $\rho$ (i.e. inequalities of the form $\,E_F(\rho)-E_F(\sigma)\leq C_\rho\|\rho-\sigma\|_1$) with and without restrictions on the support of the state $\sigma$.
\end{abstract}

\section{Introduction}

The Entanglement of Formation (EoF) is one of the basic entanglement measures in bipartite quantum systems \cite{P&V,4H}. If $A$ and $B$ are finite-dimensional quantum systems then the EoF of a state $\rho$ of the composite system $AB$ is defined as a convex roof extension of the function
$\rho\mapsto S(\rho_A)=S(\rho_B)$ on the set $\ext\St(\HH_{AB})$ of pure states of the system $AB$ to the set $\St(\HH_{AB})$ of all states of this system, i.e.
\begin{equation}\label{EF-d}
E_F(\rho)=\inf_{\sum_k\!p_k\rho_k=\rho}\sum_kp_kS([\rho_k]_A),
\end{equation}
where the  infimum is over all discrete (finite or countable) ensembles $\{p_k, \rho_k\}$ of pure states in $\St(\HH_{AB})$ with the average state $\rho$ \cite{Bennett}.\smallskip

It is known (cf. \cite{N}) that $E_F$ is a uniformly continuous function on the set $\St(\HH_{AB})$. So, as the function $E_F$  is convex by the definition, it is a \emph{convex closure}\footnote{The convex closure of a function $f$ on a convex subset of a metric linear space is the maximal lower semicontinuous convex function on this set majorized by $f$ \cite{Roc}.} of the concave
function $\rho\mapsto S(\rho_A)$ on $\St(\HH_{AB})$. Hence, by the Fenchel-Moreau theorem, it can be represented as a double Fenchel transform of the latter function, which can be written as
\begin{equation}\label{VE}
  E_F(\rho)=\sup\left\{\Tr\Lambda\rho\,\left|\,\Lambda\in\Bsa(\HH_{AB}),
  \langle\psi|\Lambda|\psi\rangle\le S\bigl(\psi_A\bigr)\ \forall\psi\in \HH^1_{AB}\right.\right\},
\end{equation}
where $\Bsa(\HH_{AB})$ is the set of all bounded Hermitian operators on  $\HH_{AB}$ and $\HH^1_{AB}$ is the unit sphere in $\HH_{AB}$ (we use the shortened record $\psi_A\doteq [|\psi\rangle\langle\psi|]_A$)  \cite{Ulm,H-NEW}.

It is remarkable that representation (\ref{VE}) remains valid in the case when $A$ and $B$ are infinite-dimensional quantum systems, while the
applicability of the expression (\ref{EF-d}) in this general settings is (as far as we know) an open question. There are
serious reasons (described in \cite{EM, Lami-new})  to believe that in this case  a true version of the EoF should be defined as
\begin{equation}\label{EF-c}
E_F(\rho)=\!\inf_{\int\varrho\mu(d\varrho)=\rho}\int_{\ext\St(\HH_{AB})}S(\varrho_A)\mu(d\varrho),
\end{equation}
where the  infimum is over all Borel probability measures\footnote{It is reasonable to consider such measures as generalized ensembles of quantum states \cite{H-Sh-2,H-SCI}.} on the set of pure states in $\St(\HH_{AB})$ with the barycenter $\rho$ (this infimum is always attained) \cite[Section 5]{EM}.\smallskip

It is proved in \cite{EM} that the definitions (\ref{EF-d}) and (\ref{EF-c}) coincide if at least one of the marginal entropies of the state $\rho$ is finite and that the function
$E_F$ defined in (\ref{EF-c}) is lower  semicontinuous on the set $\St(\HH_{AB})$  (while the lower semicontinuity of the function
$E_F$ defined in (\ref{EF-d}) is equivalent to its coincidence with the function $E_F$ defined in (\ref{EF-c})).
So, in this article \emph{we assume that the EoF is defined by formula  (\ref{EF-c}) keeping in mind that this definition coincide with (\ref{EF-d}) for any state  $\rho$ of a finite-dimensional bipartite system $AB$ and for any state  $\rho$ of an infinite-dimensional bipartite system $AB$ such that
$\,\min\{S(\rho_A),S(\rho_B)\}<+\infty$}.\smallskip

It is the lower  semicontinuity of the convex function $E_F$ defined in (\ref{EF-c}) that implies, by the Fenchel-Moreau theorem, the validity of the variational expression (\ref{VE}) for the EoF in
infinite-dimensional bipartite quantum systems.\smallskip

The variational expression (\ref{VE}) has a simple geometrical interpretation. Since every affine functional on $\St(\HH_{AB})$ has the form $\ell(\rho)=\Tr\Lambda\rho$ with Hermitian $\Lambda$,
expression (\ref{VE}) shows that the function $E_F$ is the poinwise supremum of the family of all affine functionals on $\St(\HH_{AB})$ not exceeding $E_F$  (any convex lower semicontinuous function  on $\St(\HH_{AB})$
has such a representation \cite{Roc}).\smallskip

This article is devoted to the question of attainability of the supremum in (\ref{VE}) for a given state $\rho$. This attainability means the existence
of a functional $\ell(\rho)=\Tr\Lambda_{\rho}\rho$  on $\St(\HH_{AB})$ such that  $\ell(\rho)=E_F(\rho)$ and $\ell(\sigma)\leq E_F(\sigma)$  for any state  $\sigma$ in $\St(\HH_{AB})$.\footnote{The condition $\,\langle\psi|\Lambda|\psi\rangle\le S\bigl(\psi_A\bigr)\,$ for all $\,\psi\in \HH^1_{AB}$ implies that $\,\Tr\Lambda_{\rho}\sigma\leq E_F(\sigma)\,$ for any state  $\sigma$ in $\St(\HH_{AB})$, see the proof of the implication $\rm (ii)\Rightarrow(i)$ in Theorem \ref{P-3}.}
Such functional is called \emph{supporting} for the function $E_F$ at the state $\rho\in\St(\HH_{AB})$.

General theorems of convex analysis imply the existence of a supporting functional for a convex continuous function on a compact convex subset of a finite-dimensional space at \emph{any  internal point} of this set: for a proper convex function on a compact convex subset of a finite-dimensional linear space, the subdifferential is nonempty at every point of the relative interior of its effective domain; see, for example, \cite[Theorem~23.4]{Roc} (subdifferential is the set of all supporting functionals).
This result  implies, in particular, that \emph{the supremum in (\ref{VE}) is always attained provided that $\rho$ is a non-degenerate state of a finite-dimensional bipartite quantum system $AB$.} A direct proof of this claim is given in \cite{H-NEW}.

At the same time, the existence of a supporting functional for a convex continuous function on a compact convex set at boundary points of this set cannot be proved in general: the convex function $\,g(x)=-\sqrt{1-x^2}\,$ on $\,[-1,1]\,$ has no finite supporting affine functional at $x=\pm1$. Thus, the question of  attainability of the supremum in (\ref{VE}) for a given degenerate state $\rho$
cannot be resolved from the general convex analysis point of view. Nevertheless, \emph{the special roof structure} of the EoF (described in \cite{Ulm}) gave us  a hope that the existence of the supporting affine functional
can be established for all states of a finite-dimensional bipartite quantum system $AB$ using additional arguments. One of the ideas in this direction was to use
the existence of \emph{local} supporting affine functional for any state (described at the end of Section 3) and construct its appropriate extension  to the set $\St(\HH_{AB})$.
The main aim of this article is to show that all these hopes are in vain.

Our approach is based on a simple criterion of attainability of the supremum in (\ref{VE}) at a given state $\rho$ described in Section 3. This criterion allows us to reduce the question of
existence (resp. nonexistence) of the supporting affine functional for the EoF at a given state $\rho$ to the question of
existence (resp. nonexistence) for this state $\rho$ of a positive number $C_{\rho}$  such that
\begin{equation}\label{LSC}
 E_F(\rho)-E_F(\sigma)\leq \frac{C_{\rho}}{2}\,\|\rho-\sigma\|_1\quad \forall\sigma\in\St(\mathcal{H}_{AB}).
\end{equation}
The validity of (\ref{LSC}) with some positive $C_{\rho}$ can be called Lipschitz semicontinuity of the EoF at the state $\rho$. It essential that the minimal $C_{\rho}$ in (\ref{LSC}) is equal to
the minimal spectral diameter of the operator $\Lambda_{\rho}$ at which the supremum in (\ref{VE}) is attained (Theorem 1).

The idea of proving the absence of the Lipschitz semicontinuity of the EoF at some degenerate state is simple: to use Wootters' formula for searching  a state $\rho$ of
the 2-qubit system $AB$ for which (\ref{LSC}) does not hold with any finite $C_{\rho}$. Nevertheless, it was not clear how to
do it practically (as  Wootters' formula is not too simple). Fortunately, this is done very quickly with Claude Fable 5.

\section{Preliminaries}

Throughout the article, we assume that $\mathcal{H}_X$ is a separable Hilbert space describing a quantum system $X$,
$\mathfrak{B}_{\rm sa}(\mathcal{H}_X)$ is the real Banach space of all Hermitian bounded operators on $\mathcal{H}_X$ with the operator norm $\|\cdot\|$ and $\mathfrak{T}(\mathcal{H}_X)$ is the (complex)
Banach space of all trace-class
operators on $\mathcal{H}_X$  with the trace norm $\|\!\cdot\!\|_1$. Let
$\mathfrak{S}(\mathcal{H}_X)$ be  the set of quantum states (positive operators
in $\mathfrak{T}(\mathcal{H}_X)$ with unit trace) \cite{H-SCI,Wilde}.\smallskip

Write $I_{X}$ for the unit operator on a Hilbert space
$\mathcal{H}_X$.\smallskip

The \emph{von Neumann entropy} of a quantum state
$\rho \in \mathfrak{S}(\HH)$ is  defined by the formula
$S(\rho)=\operatorname{Tr}\eta(\rho)$, where  $\eta(x)=-x\ln x$ if $x>0$
and $\eta(0)=0$. It is a concave lower semicontinuous function on the set~$\mathfrak{S}(\HH)$ taking values in~$[0,+\infty]$ \cite{H-SCI,L-2,W}.

We will use the  homogeneous extension of the von Neumann entropy to the positive cone $\T_+(\HH)$ defined as
\begin{equation}\label{S-ext}
S(\rho)\doteq(\Tr\rho)S(\rho/\Tr\rho)=\Tr\eta(\rho)-\eta(\Tr\rho)
\end{equation}
or any nonzero operator $\rho$ in $\T_+(\HH)$ and equal to $0$ at the zero operator \cite{L-2}.\smallskip

The concavity of the von Neumann entropy on $\St(\HH)$ implies that
\begin{equation}\label{S-in}
S(\rho)+S(\sigma)\leq S(\rho+\sigma)\quad \forall \rho,\sigma\in\T_+(\HH).
\end{equation}

A finite or
countable collection $\{\rho_{i}\}$ of states
with a probability distribution $\{p_{i}\}$ is conventionally called
\textit{discrete ensemble} and denoted by $\{p_{i},\rho_{i}\}$. The state
$\bar{\rho}\doteq\sum_{i}p_{i}\rho_{i}$ is called \emph{average state} of this  ensemble. \smallskip

A \textit{generalized (continuous) ensemble} is defined as
a Borel probability measure on the set of quantum states \cite{H-SCI,H-Sh-2}. We denote by $\PP(\mathcal{H})$ the set of all Borel probability measures on $\ext\mathfrak{S}(\mathcal{H})$ equipped with the topology of weak convergence
\cite{Bog,Par}.\footnote{The weak convergence of a sequence $\,\{\mu_n\}\subset\PP(\mathcal{H})\,$  to a measure $\,\mu_0\in\PP(\mathcal{H})\,$ means that
$\,\lim_{n\rightarrow\infty}\int f(\varrho)\mu_n(d\varrho)=\int f(\varrho)\mu_0(d\varrho)\,$
for any continuous bounded function $f$ on $\,\St(\HH)$.}
 The set $\mathcal{P}(\mathcal{H})$ is a complete
separable metric space containing the dense subset $\PP_0(\mathcal{H})$ of discrete measures (corresponding to discrete ensembles) \cite{Par}. The average state of a generalized
ensemble $\mu \in \PP(\mathcal{H})$ is the barycenter of the measure
$\mu $ defined by the Bochner integral
\begin{equation*}
\bar{\rho}(\mu )=\int_{\ext\mathfrak{S}(\mathcal{H})}\rho \mu (d\rho ).
\end{equation*}

We will use the function $S_A$ on the set $\PP(\mathcal{H}_{AB})$ defined as
\begin{equation}\label{SA-def}
  S_A(\mu)\doteq\int_{\ext\St(\HH_{AB})}S(\rho_A)\mu(d\rho).
\end{equation}
The  lower semicontinuity of the von Neumann entropy implies the  lower semicontinuity of the function $S_A$ on the set $\PP(\mathcal{H}_{AB})$ \cite{H-Sh-2}.
Using this function the definition (\ref{EF-c}) can be rewritten as
\begin{equation}\label{EF}
  E_F(\rho)=\inf_{\mu\in\PP_\rho(\mathcal{H}_{AB})}S_A(\mu),
\end{equation}
where $\PP_\rho(\mathcal{H}_{AB})$ is the subset of $\PP(\HH_{AB})$ consisting of measures $\mu$ such that $\,\bar{\rho}(\mu)=\rho$.

\section{Criterion for attainability of the supremum in (\ref{VE})}

In this section, we consider (generalizing  the task mentioned before) a criterion for attainability of the supremum in the expression
\begin{equation}\label{VE-0}
  E_F(\rho)=\sup\left\{\Tr\Lambda\rho\,\left|\,\Lambda\in\Bsa(\HH_{0}),
  \langle\psi|\Lambda|\psi\rangle\le S\bigl(\psi_A\bigr)\ \forall\psi\in \HH^1_{0}\right.\right\},
\end{equation}
where $\rho$ is an arbitrary state in $\St(\HH_{AB})$, $\HH_0$ is any subspace of $\HH_{AB}$ containing the support of $\rho$,
$\Bsa(\HH_{0})$ is the set of all bounded Hermitian operators on $\HH_{0}$ and $\HH^1_{0}$ is the unit sphere in $\HH_{0}$. This expression
is proved in the same way as the expression in (\ref{VE}) by noting that the function $E_F$ on $\St(\HH_{0})$ coincides with
the convex closure of the the concave function $\varrho\mapsto S(\varrho_A)$ on $\St(\HH_{0})$ and by using the Fenchel-Moreau theorem.

\begin{theorem}\label{P-3} Let $A$ and $B$ be quantum systems of any dimensions. Let $\rho$ be a state of the system $AB$ and $\HH_0$ be a subspace of $\HH_{AB}$ such that $\,\supp\rho\subseteq\HH_0$.
Write $D(\Lambda)$ for the diameter of the spectrum of an operator $\Lambda\in\B_{\rm sa}(\HH_0)$:\smallskip
\begin{equation}\label{D-def}
D(\Lambda)=\max \mathrm{Sp}(\Lambda)-\min \mathrm{Sp}(\Lambda).
\end{equation}
For a given $C>0$ the following properties are equivalent:
\begin{enumerate}
  \item[\emph{(i)}] the inequality $\,E_F(\rho)-E_F(\sigma)
        \leq C\|\rho-\sigma\|_1\,$ holds for any state $\,\sigma\in\St(\mathcal{H}_{0})$;
  \item[\emph{(ii)}] there exists an operator
    $\Lambda_\rho\in\B_{\mathrm{sa}}(\mathcal{H}_0)$ such that  $D(\Lambda_\rho)\leq 2C$,
    \[
        E_F(\rho)
        =\operatorname{Tr}\Lambda_\rho\rho
        \quad and\quad
        \langle\psi|\Lambda_\rho|\psi\rangle
        \leq S(\psi_A)
        \quad
        \forall\,\psi\in\mathcal{H}^1_{0};
    \]
    \item[\emph{(iii)}] there exists an operator
    $\widetilde{\Lambda}_\rho\in\B_{\mathrm{sa}}(\mathcal{H}_{0})$ such that  $\,\|\widetilde{\Lambda}_\rho\|\leq C\,$ and
    \[
        E_F(\rho)=\inf_{\psi\in\mathcal{H}^1_{0}}
        \left\{
            S(\psi_A)+\operatorname{Tr}\widetilde{\Lambda}_\rho(\rho-\psi)\right\}.
    \]
 \end{enumerate}
\end{theorem}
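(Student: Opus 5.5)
I would prove the cycle (ii)$\Rightarrow$(i)$\Rightarrow$(iii)$\Rightarrow$(ii), working throughout on $\St(\HH_0)$, where $E_F$ is the convex closure of $\varrho\mapsto S(\varrho_A)$ and (\ref{VE-0}) holds.

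\emph{Step (ii)$\Rightarrow$(i).} First I show that $\langle\psi|\Lambda_\rho|\psi\rangle\le S(\psi_A)$ for all $\psi\in\HH^1_0$ implies $\Tr\Lambda_\rho\sigma\le E_F(\sigma)$ for every $\sigma\in\St(\HH_0)$. Take any $\mu\in\PP_\sigma(\HH_{AB})$. Its support lies in pure states with support in $\HH_0$, since $\supp\sigma\subseteq\HH_0$ forces $\mu$-almost every $\varrho$ to be supported in $\HH_0$. Integrating the pointwise inequality and using the bounded linear functional $\varrho\mapsto\Tr\Lambda_\rho\varrho$ with the Bochner barycenter gives $\Tr\Lambda_\rho\sigma\le S_A(\mu)$. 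Taking the infimum in (\ref{EF}) yields the claim.

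Hence $E_F(\rho)-E_F(\sigma)\le\Tr\Lambda_\rho(\rho-\sigma)$. Since $\Tr(\rho-\sigma)=0$, I may replace $\Lambda_\rho$ by $\Lambda_\rho-cI_0$ with $c$ the midpoint of the spectrum. This operator has norm $D(\Lambda_\rho)/2\le C$, so $|\Tr\Lambda_\rho(\rho-\sigma)|\le C\|\rho-\sigma\|_1$.

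\emph{Step (iii)$\Rightarrow$(ii).} Set $\Lambda_\rho=\widetilde\Lambda_\rho+(E_F(\rho)-\Tr\widetilde\Lambda_\rho\rho)I_0$. Then $D(\Lambda_\rho)=D(\widetilde\Lambda_\rho)\le2\|\widetilde\Lambda_\rho\|\le2C$ and $\Tr\Lambda_\rho\rho=E_F(\rho)$. The infimum formula gives $E_F(\rho)\le S(\psi_A)+\Tr\widetilde\Lambda_\rho\rho-\langle\psi|\widetilde\Lambda_\rho|\psi\rangle$ for every $\psi$, which is exactly $\langle\psi|\Lambda_\rho|\psi\rangle\le S(\psi_A)$.

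\emph{Step (i)$\Rightarrow$(iii)} is the main obstacle: a Hahn--Banach/Lipschitz-extension argument producing a norm-$C$ functional. I would use an inf-convolution. Define
\[
g(\omega)=\inf_{\sigma\in\St(\HH_0)}\{E_F(\sigma)+C\|\omega-\sigma\|_1\}
\]
on the convex set $\St(\HH_0)$, or better on the cone/affine hull. Then $g$ is convex, $C$-Lipschitz in trace norm, and satisfies $g\le E_F$ and, by (i), $g(\rho)=E_F(\rho)$.

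Extending $g$ to the real Banach space $\T_{\rm sa}(\HH_0)$ by the same formula keeps it convex, finite and $C$-Lipschitz, hence continuous. The Hahn--Banach (subdifferential) theorem for continuous convex functions then gives a subgradient at $\rho$. This is a bounded functional on $\T_{\rm sa}(\HH_0)$ of norm $\le C$, so it is $X\mapsto\Tr\widetilde\Lambda_\rho X$ with $\widetilde\Lambda_\rho\in\Bsa(\HH_0)$, $\|\widetilde\Lambda_\rho\|\le C$, by the duality $\T^*=\B$.

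The subgradient inequality then reads $E_F(\sigma)\ge g(\sigma)\ge E_F(\rho)+\Tr\widetilde\Lambda_\rho(\sigma-\rho)$. Applying it to pure $\sigma=|\psi\rangle\langle\psi|$, where $E_F=S(\psi_A)$, gives ``$\le$'' in (iii). Equality follows because the infimum over pure $\psi$ of $S(\psi_A)+\Tr\widetilde\Lambda_\rho(\rho-\psi)$ is at most the average over an optimal ensemble, $\mu\in\PP_\rho$ attaining (\ref{EF}), which equals $E_F(\rho)$.

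The delicate points are the infinite-dimensional ones. First, the cases $E_F(\rho)=+\infty$ must be handled separately or excluded, since (i) forces finiteness unless $E_F\equiv\infty$ near $\rho$. Second, the subgradient must be represented by a bounded operator rather than a singular functional, and this is guaranteed here because $\T_{\rm sa}^*=\Bsa$.
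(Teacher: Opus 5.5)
Your proof is correct. The step (ii)$\Rightarrow$(i) is the same as in the paper, but (i)$\Rightarrow$(iii) follows a genuinely different route.

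\textbf{The paper's route.} It rewrites $E_F(\rho)=\inf_\sigma\{E_F(\sigma)+C\|\rho-\sigma\|_1\}$ as $\inf_{\mu\in\PP(\HH_0)}\max_{\Lambda\in\B(C)}G(\mu,\Lambda)$. It then swaps $\inf$ and $\max$ by a minimax theorem, using $\sigma$-weak compactness of the ball $\B(C)$. Finally it needs a Choquet-type argument to reduce the infimum over measures to one over pure states: the extreme points of $\PP(\HH_0)$ are Dirac measures, every measure has a barycentric decomposition over them, and $\mu\mapsto G(\mu,\Lambda)$ is lower semicontinuous.

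\textbf{Your route.} You regularize $E_F$ by inf-convolution with $C\|\cdot\|_1$, which gives a finite, convex, $C$-Lipschitz function on the whole Banach space $\T_{\rm sa}(\HH_0)$. You then take a Hahn--Banach subgradient at $\rho$ and identify it, via $\T_{\rm sa}(\HH_0)^*=\Bsa(\HH_0)$, with an operator of norm $\le C$. This uses only three facts: convexity of $E_F$ on $\St(\HH_0)$, the identity $E_F(|\psi\rangle\langle\psi|)=S(\psi_A)$, and (near-)optimal decompositions of $\rho$ for the reverse inequality in (iii).

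\textbf{What each buys.} Your argument bypasses the minimax theorem and all the measure-theoretic structure of $\PP(\HH_0)$. It also delivers the supporting inequality $E_F(\sigma)\ge E_F(\rho)+\Tr\widetilde\Lambda_\rho(\sigma-\rho)$ on all of $\St(\HH_0)$ at once, so (ii) follows by a constant shift. The paper's route gives an exact dual formula, with attained maximum, for the Lipschitz-penalized problem $\inf_\mu\{S_A(\mu)+C\|\rho-\bar\rho(\mu)\|_1\}$, which fits the variational framework of (\ref{VE-0}).

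\textbf{(iii)$\Rightarrow$(ii).} Your explicit choice $\Lambda_\rho=\widetilde\Lambda_\rho+(E_F(\rho)-\Tr\widetilde\Lambda_\rho\rho)I_0$ is a more direct version of the paper's contradiction argument with $X$ and $Y$. Both rest on the same invariance $H(\psi,\Lambda)=H(\psi,\Lambda+cI_0)$. The edge case $E_F(\rho)=+\infty$ that you flag is not treated in the paper's proof either.
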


\begin{remark}\label{P-3r} Property $\rm (ii)$ in Theorem \ref{P-3} can be strengthened by saying that
$$
\langle\psi|\Lambda_\rho|\psi\rangle=S(\psi_A)\quad\textrm{ for }\mu_\rho\textrm{-almost all}\;\;\psi=|\psi\rangle\langle\psi|\in\ext\St(\mathcal{H}_0),
$$
where $\mu_\rho$ is any measure on $\ext\St(\mathcal{H}_{0})$  such that $\int\varrho\mu(d\varrho)=\rho$ and  $E_F(\rho)=\int\! S(\varrho_A)\mu(d\varrho)$.\footnote{Existence of this measure for any state $\rho$ is proved in \cite{EM}.} Indeed, for any such measure we have
$$
\int_{\St(\HH_0)}\left[S(\psi_A)-\langle\psi|\Lambda_\rho|\psi\rangle\right]\mu_\rho(d\psi)=0.
$$
Since the expression in the squire brackets is nonnegative, it is equal to zero for\break $\mu_\rho$-almost all $\psi$.
\end{remark}
\smallskip
\emph{Proof.} $\rm (i)\Rightarrow(iii)$.  Note that
\begin{equation}\label{meq}
\begin{aligned}
E_F(\rho) & =\!\inf_{\sigma\in\St(\HH_{0})}\!\left\{E_F(\sigma)+C\|\rho-\sigma\|_1\right\}
 \\   &
=\!\inf_{\mu\in\PP(\HH_{0})}\!\left\{S_A(\mu)+C\|\rho-\bar{\rho}(\mu)\|_1\right\}\\ &
=\!\inf_{\mu\in\PP(\HH_{0})}\max_{\Lambda\in\B(C)}G(\mu,\Lambda)=\max_{\Lambda\in\B(C)}\inf_{\mu\in\PP(\HH_{0})}G(\mu,\Lambda),
\end{aligned}
\end{equation}
where $S_A$ is the function defined in (\ref{SA-def}), $G(\mu,\Lambda)\doteq S_A(\mu)+\Tr \Lambda(\rho-\bar{\rho}(\mu))$ and $\B(C)$ is the ball of radius $C$ in $\Bsa(\HH_0)$. The first  equality in (\ref{meq})  holds due to the validity of $\rm(i)$ (as $\rho\in\St(\HH_0)$).  The second equality in (\ref{meq})  follows from the expression (\ref{EF}) and the possibility the permute the infima\footnote{i.e. the property $\inf_{x\in X}\inf_{y\in Y}f(x,y)=\inf_{y\in Y}\inf_{x\in X}f(x,y)$.}  because it is clear that
$\PP_\rho(\mathcal{H}_{AB})\subset\PP(\HH_{0})$.  The third equality in (\ref{meq})  is obvious. The last equality in (\ref{meq})  follows
from Theorem 3.1 in \cite{sim}, since the function
$G(\mu,\Lambda)$ is affine in both arguments and continuous in $\Lambda$ w.r.t. the $\sigma$-weak operator topology on $\B_{\rm sa}(\HH_{0})$, while the ball $\B(C)$ is
compact in this topology.

Thus, to prove $\rm(iii)$ it suffices to show that
$$
\inf_{\mu\in\PP(\HH_{0})}G(\mu,\Lambda)=\inf_{\psi\in\HH^1_{0}}H(\psi,\Lambda),\quad \textrm{where} \quad H(\psi,\Lambda)\doteq S(\psi_A)+\Tr \Lambda(\rho-\psi).
$$
This can be done by combining the following facts:
\begin{itemize}
  \item the set $\ext\PP(\HH_{0})$ of extreme points of the convex set $\PP(\HH_{0})$ is closed and consists of Dirac (single atom) measures concentrated at
  the pure states in $\St(\HH_{0})$ which correspond to vectors in $\HH^1_{0}$;
  \item any measure $\mu$ in $\PP(\HH_{0})$ can be represented as
  $$
  \mu=\rm \beta(\textbf{p})\doteq\int_{\ext\PP(\HH_{0})} \nu \textbf{p}(d\nu)
  $$
  for some probability measure $\textbf{p}$ on $\ext\PP(\HH_{0})$ (this can be proved, for instance, by using the $\mu$-compactness of the set $\PP(\HH_{0})$ (in terms of \cite{P&Sh}) and
  Proposition 5.1 in \cite{P&Sh});
  \item the affine and bounded from below function $\,\mu\mapsto G(\mu,\Lambda)\,$ is lower semicontinuous on $\PP(\HH_{0})$ and hence the equality
  $$
  G(\rm \beta(\textbf{p}),\Lambda)=\int_{\ext\PP(\HH_{0})}G(\nu,\Lambda)\textbf{p}(d\nu)
  $$
  holds for any probability measure $\textbf{p}$ on $\ext\PP(\HH_{0})$ (this can be shown, for instance, by using the arguments from the proof
  of Corollary A-1 in \cite[the Appendix]{EM});
\end{itemize}

$\rm (iii)\Rightarrow(ii)$. This implication is proved by noting that $\,H(\psi,\Lambda)=H(\psi,\Lambda+cI_{0})\,$ for any $c\in\mathbb{R}$, where $I_0$ is the unit operator on $\HH_0$. Indeed, the last property implies
\begin{equation}\label{tpt}
\begin{aligned}
E_F(\rho)=\max_{\Lambda\in \B(C)}
\inf_{\psi\in\mathcal{H}^1_{0}}
H(\psi,\Lambda)
&=
\max\left\{
    \operatorname{Tr}\Lambda\rho
    \,\middle|\,
    \Lambda\in \mathfrak{D}(C),\;
    g(\Lambda)=0
\right\},
\end{aligned}
\end{equation}
where the first equality follows from $\rm (iii)$,
\[
    \mathfrak{D}(C)
    =
    \left\{
        \Lambda\in\mathcal{B}_{\mathrm{sa}}(\mathcal{H}_{0})
        \,\left|\,
         D(\Lambda)\leq 2C
    \right.\right\}\quad \textrm{and} \quad  g(\Lambda)=\inf_{\psi\in\mathcal{H}^1_{0}}\left\{S(\psi_A)-\langle\psi|\Lambda|\psi\rangle
    \right\}.
\]

To complete the proof of $\rm(ii)$ it suffices to show  that the r.h.s. of (\ref{tpt}) is equal to\break $X\doteq\sup\left\{
        \operatorname{Tr}\Lambda\rho
        \,\middle|\,
        \Lambda\in \mathfrak{D}(C),\;
        g(\Lambda)\geq 0
    \right\}$.
Denote the r.h.s. of (\ref{tpt}) by $Y$. It is clear that $Y\leq X$. Assume that $Y<X$.
Then there is an operator $\Lambda\in \mathfrak{D}(C)$ such that $g(\Lambda)>0$ and
$\operatorname{Tr}\Lambda\rho-Y=\Delta>0$. But in this case
\[
  \Lambda+\Delta I_0\in\mathfrak{D}(C),\quad   g(\Lambda+\Delta I_0)=0\quad \textrm{and}\quad
    \operatorname{Tr}(\Lambda+\Delta I_0)\rho
    =
    \operatorname{Tr}\Lambda\rho+\Delta,
\]
contradicting our assumption.\smallskip

$\rm (ii)\Rightarrow(i)$.  It follows from $\rm (ii)$ that
$$
S_A(\nu)=\int_{\ext\St(\HH_0)}S(\psi_A)\nu(d\psi)\geq \int_{\ext\St(\HH_0)}\langle\psi|\Lambda_\rho|\psi\rangle\nu(d\psi)=\operatorname{Tr}\Lambda_\rho\bar{\rho}(\nu)
$$
for any $\nu\in\PP(\HH_0)$.  Since the set of measures in $\PP(\HH_{AB})$ with barycenters in $\St(\HH_0)$ coincides with the set $\PP(\HH_0)$, this inequality and the definition of the function $E_F$ imply
\[
E_F(\sigma)
\geq
\operatorname{Tr}\Lambda_\rho\sigma,
\qquad
\forall\,\sigma\in \St(\mathcal{H}_0),
\]
and therefore
\[
E_F(\rho)-E_F(\sigma)
\leq
\operatorname{Tr}\Lambda_\rho(\rho-\sigma)
\leq
C\|\rho-\sigma\|_1,
\]
where the last inequality follows from the well known continuity bound for the function $\,\vartheta\mapsto\Tr\Lambda_\rho\vartheta\,$ (see, for instance, Example 2 in \cite{QC}), since $\, D(\Lambda_\rho)\leq 2C$. $\Box$

\bigskip

\smallskip
Theorem 1 with $\mathcal{H}_0=\mathcal{H}_{AB}$ gives a criterion
for existence of a (global) supporting functional for the EoF at the state
$\rho$.

Theorem 1 with $\mathcal{H}_0=\operatorname{supp}\rho$ gives a criterion
for existence of so-called \emph{local supporting functional} for the EoF at the
state $\rho$, i.e. such a functional
$\,\ell_\rho(\sigma)=\operatorname{Tr} \Lambda_\rho\sigma,$ $\Lambda_\rho\in\Bsa(\operatorname{supp}\rho)$, on $\,\St(\operatorname{supp}\rho)\,$
that
\[
\ell_\rho(\rho)=E_F(\rho)\quad \textrm{and}\quad \ell_\rho(\sigma)\leq E_F(\sigma)\quad\forall\sigma\in \St(\operatorname{supp}\rho).
\]

The general results of convex analysis described in the Introduction guarantee the existence
of a local supporting functional for the EoF at an arbitrary state $\rho$ of a
finite-dimensional system $AB$. Proposition \ref{cex} in the next section
shows that\emph{ for  some states $\rho$  of 2-qubit system local supporting functional for the EoF cannot be
extended to global supporting functional for the EoF.}

\section{Basic counterexample and its corollaries}

It is mentioned in the Introduction  that  general results of convex analysis guarantee the existence
of a (global) supporting functional for the EoF at an arbitrary non-degenerate state $\rho$ of any finite-dimensional bipartite quantum system.

The aim of this section is to show that a supporting functional for the EoF \emph{may not exist} in some degenerate states
of \emph{all}  bipartite quantum systems of any dimensions. The crucial result here is the following corollary of the implication $\,\rm(ii)\Rightarrow(i)\,$ in Theorem \ref{P-3} obtained with the help of Claude Fable 5.

\begin{proposition}\label{cex} Let $A$ and $B$ be qubit quantum systems. Let
$|00\rangle,|01\rangle,|10\rangle,|11\rangle$ be a  basis in $\HH_{AB}=\mathbb C^2\otimes\mathbb C^2$. The EoF has no supporting functional at the state
\begin{equation}\label{eq:omega}
  \rho=\frac{1}{2}|\Phi^+\rangle\langle\Phi^+|+\frac{1}{2}|01\rangle\langle01|,
\end{equation}
where  $|\Phi^+\rangle=\frac{1}{\sqrt2}(|00\rangle+|11\rangle)$.
\end{proposition}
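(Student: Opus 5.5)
The plan is to argue by contradiction through the implication $\rm(ii)\Rightarrow(i)$ of Theorem \ref{P-3} with $\HH_0=\HH_{AB}$. If the EoF had a supporting functional $\Tr\Lambda_\rho(\cdot)$ at $\rho$, then $\Lambda_\rho$ is a bounded operator on the $4$-dimensional space $\HH_{AB}$. Property (ii) therefore holds with $C=D(\Lambda_\rho)/2<+\infty$, and (i) gives
$$E_F(\rho)-E_F(\sigma)\le C\|\rho-\sigma\|_1\quad\textrm{for all }\sigma\in\St(\HH_{AB}).$$
So it suffices to exhibit a one-parameter family $\sigma_\varepsilon\to\rho$ with $\|\rho-\sigma_\varepsilon\|_1=O(\varepsilon)$ but $E_F(\rho)-E_F(\sigma_\varepsilon)\ge c\sqrt{\varepsilon}$ for some $c>0$.

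\textbf{Choice of perturbation.} The state $\rho$ is degenerate: its diagonal entry $\langle10|\rho|10\rangle$ vanishes. I will take
$$\sigma_\varepsilon=(1-\varepsilon)\rho+\varepsilon|10\rangle\langle10|,$$
so that $\|\rho-\sigma_\varepsilon\|_1\le2\varepsilon$. Both $\rho$ and $\sigma_\varepsilon$ are ``X-states'' in the basis $|00\rangle,|01\rangle,|10\rangle,|11\rangle$: the only nonzero entries are on the diagonal and in the $(00,11)$ and $(01,10)$ positions.

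\textbf{Computing the EoF.} I will use Wootters' formula $E_F=h\bigl(\tfrac{1+\sqrt{1-\mathcal C^2}}{2}\bigr)$, where $h$ is the binary entropy and $\mathcal C$ the concurrence. For X-states Wootters' recipe (the square roots of the eigenvalues of $\varrho\tilde\varrho$) reduces to the standard closed form
$$\mathcal C=2\max\Bigl\{0,\ |\varrho_{00,11}|-\sqrt{\varrho_{01,01}\varrho_{10,10}},\ |\varrho_{01,10}|-\sqrt{\varrho_{00,00}\varrho_{11,11}}\Bigr\}.$$
I would verify this reduction briefly, since $\varrho\tilde\varrho$ is block-diagonal in the two $2\times2$ blocks. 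For $\rho$ this gives $\mathcal C(\rho)=2\cdot\tfrac14=\tfrac12$. For $\sigma_\varepsilon$ the relevant entries are $\varrho_{00,11}=\tfrac{1-\varepsilon}{4}$, $\varrho_{01,01}=\tfrac{1-\varepsilon}{2}$ and $\varrho_{10,10}=\varepsilon$, while $\varrho_{01,10}=0$. Hence
$$\mathcal C(\sigma_\varepsilon)=\tfrac12-\tfrac{\varepsilon}{2}-2\sqrt{\varepsilon(1-\varepsilon)/2}=\tfrac12-\sqrt{2\varepsilon}+O(\varepsilon).$$

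\textbf{Conclusion.} The function $\mathcal C\mapsto h\bigl(\tfrac{1+\sqrt{1-\mathcal C^2}}{2}\bigr)$ is smooth and strictly increasing near $\mathcal C=\tfrac12$, with strictly positive derivative $\kappa$ there. Therefore $E_F(\rho)-E_F(\sigma_\varepsilon)\ge\tfrac{\kappa}{2}\sqrt{2\varepsilon}$ for small $\varepsilon$. Dividing by $\|\rho-\sigma_\varepsilon\|_1\le 2\varepsilon$ gives a ratio that blows up like $\varepsilon^{-1/2}$, which contradicts (i) for any finite $C$.

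The main obstacle is identifying the right direction of perturbation. It must be the one that switches on the vanishing population $\varrho_{10,10}$ multiplying the nonzero population $\varrho_{01,01}$ under the square root in the concurrence. That is what produces the $\sqrt{\varepsilon}$ behaviour, and thus the non-Lipschitz drop, exactly at this degenerate state. Once this direction is found, the remaining work is routine: the X-state concurrence computation and the monotonicity of Wootters' function.
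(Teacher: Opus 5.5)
Your proposal is correct and follows the paper's proof essentially step by step. It uses the same contradiction via (ii)$\Rightarrow$(i) of Theorem 1 with $\HH_0=\HH_{AB}$ and the same perturbation $\sigma_\varepsilon=(1-\varepsilon)\rho+\varepsilon|10\rangle\langle10|$. It arrives at the same concurrence $\tfrac{1-\varepsilon}{2}-2\sqrt{\varepsilon(1-\varepsilon)/2}$ from the block-diagonal structure of $\varrho\tilde\varrho$; the paper computes the two $2\times2$ blocks directly instead of quoting the X-state closed form. Finally, it uses $\mathcal E'(1/2)>0$ in the same way to get a $\sqrt{\varepsilon}$ drop that no finite Lipschitz constant can absorb.
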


\textbf{Note:} The state $\rho$  has the spectrum
$\{\frac{1}{2},\frac{1}{2},0,0\}$ and, hence, $\rank\rho=2$.\smallskip

\emph{Proof.}
By the implication $\,\rm (ii)\Rightarrow(i)\,$ in Theorem 1 with $\,\HH_0=\HH_{AB}\,$ it suffices to find
a state $\sigma$ such that
\begin{equation}\label{eq:lin}
  \lim_{t\to0^+}\frac{E_F(\rho)-E_F(\rho_t)}{t}=+\infty,
\end{equation}
where $\rho_t=(1-t)\rho+t\sigma$, $\,t\in[0,1]$.\smallskip

We will show that (\ref{eq:lin}) holds with $\sigma=|10\rangle\langle10|$.

For two qubits, $E_F$ is given  by Wootters' formula~\cite{Wootters,HillWootters}:
\begin{equation}\label{eq:wootters}
  E_F(\rho)=\mathcal E\bigl(C(\rho)\bigr),
  \qquad
  \mathcal E(C)=h\Bigl(\tfrac12\bigl(1+\sqrt{1-C^2}\bigr)\Bigr),
\end{equation}
where $h$ is the binary entropy, $C(\rho)=\max\{0,\lambda_1-\lambda_2-\lambda_3-\lambda_4\}$ and
$\lambda_1\ge\dots\ge\lambda_4$ are the square roots of the eigenvalues of $\rho\widetilde\rho$, with the spin flip
$\widetilde\rho=(\sigma_y\otimes\sigma_y)\,\overline\rho\,(\sigma_y\otimes\sigma_y)$.

\begin{lemma}\label{lem:concurrence}
For the state $\rho_t$ defined after \eqref{eq:lin}, for $0\le t\le t_0=1/3$,
\begin{equation}\label{eq:Cexact}
  C(\rho_t)=\frac{1-t}{2}-2\sqrt{\frac{t(1-t)}{2}}
  =\frac{1}{2}-\sqrt{2t}+O(t),\qquad t\ra0^+.
\end{equation}
In particular, $C(\rho)=\tfrac12$ and
$E_F(\rho)=\mathcal E(\tfrac12)\approx0.246$.
\end{lemma}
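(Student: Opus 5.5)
The plan is a direct computation of the concurrence via Wootters' recipe, exploiting the block structure of $\rho_t$. First I will write $\rho_t=(1-t)\rho+t|10\rangle\langle10|$ explicitly in the ordered basis $|00\rangle,|01\rangle,|10\rangle,|11\rangle$. It is a real matrix: on $\mathrm{span}\{|00\rangle,|11\rangle\}$ it is the block $\frac{1-t}{4}\begin{pmatrix}1&1\\1&1\end{pmatrix}$, and on $\mathrm{span}\{|01\rangle,|10\rangle\}$ it is $\mathrm{diag}\bigl(\frac{1-t}{2},t\bigr)$. So $\rho_t$ is an ``X-state'' with $\rho_{23}=0$.

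Next I compute $\widetilde\rho_t$. Since $\rho_t$ is real, $\overline{\rho_t}=\rho_t$. The operator $\sigma_y\otimes\sigma_y$ maps $|00\rangle\mapsto-|11\rangle$, $|11\rangle\mapsto-|00\rangle$, $|01\rangle\mapsto|10\rangle$ and $|10\rangle\mapsto|01\rangle$. Conjugation therefore leaves the $\{|00\rangle,|11\rangle\}$ block unchanged (the two minus signs cancel) and swaps the diagonal entries of the other block. Hence $\rho_t\widetilde\rho_t$ is block diagonal:
\begin{itemize}
\item On $\{|00\rangle,|11\rangle\}$ it equals $\bigl(\frac{1-t}{4}\bigr)^2\begin{pmatrix}1&1\\1&1\end{pmatrix}^2=\frac{(1-t)^2}{8}\begin{pmatrix}1&1\\1&1\end{pmatrix}$, with eigenvalues $\frac{(1-t)^2}{4}$ and $0$.
\item On $\{|01\rangle,|10\rangle\}$ it equals $\mathrm{diag}\bigl(\frac{1-t}{2},t\bigr)\,\mathrm{diag}\bigl(t,\frac{1-t}{2}\bigr)=\frac{t(1-t)}{2}I$.
\end{itemize}
The square roots of the eigenvalues are therefore $\frac{1-t}{2}$, $\sqrt{t(1-t)/2}$ (twice) and $0$.

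The only point needing care is the ordering $\lambda_1\ge\dots\ge\lambda_4$. We have $\frac{1-t}{2}\ge\sqrt{t(1-t)/2}$ iff $1-t\ge 2t$, i.e. $t\le 1/3$, which explains the threshold $t_0=1/3$. For such $t$,
$$
\lambda_1-\lambda_2-\lambda_3-\lambda_4=\frac{1-t}{2}-2\sqrt{\frac{t(1-t)}{2}}.
$$
This is the expression in \eqref{eq:Cexact}, and $C(\rho_t)$ is its positive part. I would remark that the expression is nonnegative only for $t\le 1/9$, since $(1-t)^2/16\ge t(1-t)/2\iff 1-t\ge 8t$. This is irrelevant for the limit $t\to0^+$, but the statement should be read with the $\max\{0,\cdot\}$ for $t\in(1/9,1/3]$. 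I expect this ordering and sign bookkeeping to be the only delicate step; the rest is mechanical.

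Finally, the asymptotics: $\sqrt{t(1-t)/2}=\sqrt{t/2}\,(1+O(t))$, so $2\sqrt{t(1-t)/2}=\sqrt{2t}+O(t^{3/2})$ and $\frac{1-t}{2}=\frac12+O(t)$. This gives $C(\rho_t)=\frac12-\sqrt{2t}+O(t)$. Setting $t=0$ gives $C(\rho)=\frac12$. Then \eqref{eq:wootters} yields $E_F(\rho)=h\bigl(\frac12(1+\frac{\sqrt3}{2})\bigr)$. With $h$ taken in natural logarithms (consistent with $\eta(x)=-x\ln x$), this evaluates to approximately $0.246$.
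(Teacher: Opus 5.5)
Your proof is correct and follows the paper's argument essentially step for step: the same block decomposition $\mathrm{span}\{|00\rangle,|11\rangle\}\oplus\mathrm{span}\{|01\rangle,|10\rangle\}$, the same action of the spin flip on each block, the same values $\frac{1-t}{2},\sqrt{t(1-t)/2},\sqrt{t(1-t)/2},0$, and the same ordering condition $t\le 1/3$. Your extra remark is also right and sharper than the paper's treatment: the displayed expression is negative for $t\in(1/9,1/3]$ (it equals $-\tfrac13$ at $t=\tfrac13$), so the exact formula must carry the $\max\{0,\cdot\}$ or be restricted to $t\le 1/9$, which the paper's proof glosses over but which does not affect the $t\to0^+$ asymptotics.
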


\begin{proof}
The operator $\rho_t$ is block diagonal with respect to
$\HH=\mathcal K_1\oplus\mathcal K_2$, where
$\mathcal K_1=\operatorname{span}\{|00\rangle,|11\rangle\}$ and
$\mathcal K_2=\operatorname{span}\{|01\rangle,|10\rangle\}$. The spin flip preserves each subspace (in the ordered basis $|00\rangle,|01\rangle,|10\rangle,|11\rangle$ its unitary part is the anti-diagonal matrix with entries $(-1,1,1,-1)$), so $\rho_t\widetilde\rho_t$ is also block diagonal.\smallskip

\textbf{Block $\mathcal K_1$.}
Here
$\rho_t|_{\mathcal K_1}=a\,uu^*$ with
$u=\tfrac1{\sqrt2}(1,1)^{\mathsf T}$ and $a=\tfrac{1-t}{2}$. The spin flip restricted to this block acts by complex conjugation followed by exchange of the two coordinates, with an overall sign. Since $\sigma_xu=u$ and $u$ is real,
$\widetilde\rho_t|_{\mathcal K_1}=\rho_t|_{\mathcal K_1}$. Hence
$(\rho_t\widetilde\rho_t)|_{\mathcal K_1}=a^2uu^*$, with eigenvalues $a^2$ and $0$. This contributes $\lambda=\tfrac{1-t}{2}$ and $\lambda=0$.\smallskip

\textbf{Block $\mathcal K_2$.}
Here
$\rho_t|_{\mathcal K_2}=\operatorname{diag}(\tfrac{1-t}{2},t)$ in the basis $|01\rangle,|10\rangle$. The spin flip interchanges $|01\rangle\leftrightarrow|10\rangle$, so
$\widetilde\rho_t|_{\mathcal K_2}=\operatorname{diag}(t,\tfrac{1-t}{2})$ and
\[
  (\rho_t\widetilde\rho_t)|_{\mathcal K_2}
  =\frac{t(1-t)}{2}\,P_2,
\]
where $P_2$ is the projector on $\mathcal K_2$.
This gives $\lambda_{2,3}=\sqrt{t(1-t)/2}$.

Thus, for $t\le t_0$,
$\lambda_1=\tfrac{1-t}{2}$,
$\lambda_2=\lambda_3=\sqrt{t(1-t)/2}$,
$\lambda_4=0$, and Wootters' formula yields \eqref{eq:Cexact}. The expansion follows from
$\sqrt{2t(1-t)}=\sqrt{2t}\,(1-t/2+O(t^2))$.
\end{proof}

\begin{lemma}\label{lem:slope}
The function $\mathcal E$ in \eqref{eq:wootters} is smooth and strictly increasing on $(0,1)$, and
\begin{equation}\label{eq:kappa}
  \kappa:=\mathcal E'\bigl(\tfrac12\bigr)
  =\frac{\ln(2+\sqrt3)}{\sqrt3}
  \approx0.760>0.
\end{equation}
\end{lemma}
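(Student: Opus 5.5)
We prove the lemma by writing $\mathcal E$ as a composition of elementary smooth maps, differentiating with the chain rule, and evaluating the derivative at $C=\tfrac12$. Write $\mathcal E(C)=h(x(C))$ with $x(C)=\tfrac12\bigl(1+\sqrt{1-C^2}\bigr)$ and $h(x)=-x\ln x-(1-x)\ln(1-x)$ in natural logarithms, consistent with $\eta$.

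\emph{Smoothness.} On $(0,1)$ we have $1-C^2\in(0,1)$, so $x(C)$ is smooth and takes values in $(\tfrac12,1)$. The binary entropy $h$ is smooth on $(0,1)$, hence $\mathcal E=h\circ x$ is smooth on $(0,1)$.

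\emph{Monotonicity.} The derivatives of the two factors are
\[
h'(x)=\ln\frac{1-x}{x},\qquad x'(C)=-\frac{C}{2\sqrt{1-C^2}} .
\]
By the chain rule,
\[
\mathcal E'(C)=\frac{C}{2\sqrt{1-C^2}}\,\ln\frac{x}{1-x}.
\]
For $C\in(0,1)$ we have $x\in(\tfrac12,1)$, so $x/(1-x)>1$ and the logarithm is positive. The prefactor $C/(2\sqrt{1-C^2})$ is also positive. Hence $\mathcal E'(C)>0$ on $(0,1)$, and $\mathcal E$ is strictly increasing there.

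\emph{Value at $C=\tfrac12$.} Here $\sqrt{1-C^2}=\tfrac{\sqrt3}{2}$, so
\[
x=\frac{2+\sqrt3}{4},\qquad 1-x=\frac{2-\sqrt3}{4},\qquad \frac{x}{1-x}=\frac{2+\sqrt3}{2-\sqrt3}=(2+\sqrt3)^2,
\]
where the last equality uses $(2+\sqrt3)(2-\sqrt3)=1$. The prefactor equals
\[
\frac{1/2}{2\cdot\sqrt3/2}=\frac{1}{2\sqrt3}.
\]
Therefore
\[
\mathcal E'\bigl(\tfrac12\bigr)=\frac{1}{2\sqrt3}\cdot 2\ln(2+\sqrt3)=\frac{\ln(2+\sqrt3)}{\sqrt3}.
\]
Numerically, $\ln(3.732)\approx1.317$ and $\sqrt3\approx1.732$, which gives $\kappa\approx0.760$.

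No step is a genuine obstacle. The only place needing care is the rationalization $\frac{2+\sqrt3}{2-\sqrt3}=(2+\sqrt3)^2$, since it is what produces the factor $2$ that cancels the $\tfrac12$ in the prefactor.
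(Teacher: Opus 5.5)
Your proof is correct and follows essentially the paper's route: write $\mathcal E=h\circ x$, use $h'(x)=\ln\frac{1-x}{x}$ and $x'(C)=-\frac{C}{2\sqrt{1-C^2}}$, and evaluate at $C=\tfrac12$ using $(2+\sqrt3)(2-\sqrt3)=1$. You also spell out the smoothness and the positivity of $\mathcal E'(C)=\frac{C}{2\sqrt{1-C^2}}\ln\frac{x}{1-x}$ on $(0,1)$, which the paper's proof leaves implicit.
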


\begin{proof}
Let $x=x(C)=\tfrac12(1+\sqrt{1-C^2})$, so that $\mathcal E=h\circ x$. Then
$h'(x)=\ln\frac{1-x}{x}$ and
$x'(C)=-\dfrac{C}{2\sqrt{1-C^2}}$. At $C=\tfrac12$,
$\sqrt{1-C^2}=\tfrac{\sqrt3}{2}$,
$x=\tfrac{2+\sqrt3}{4}$, and
$1-x=\tfrac{2-\sqrt3}{4}$. Since
$(2+\sqrt3)(2-\sqrt3)=1$,
we obtain
$\frac{1-x}{x}=\frac{2-\sqrt3}{2+\sqrt3}=(2-\sqrt3)^2$,
so
$h'(x)=2\ln(2-\sqrt3)=-2\ln(2+\sqrt3)$, while
$x'(\tfrac12)=-\tfrac1{2\sqrt3}$. Their product gives \eqref{eq:kappa}.
\end{proof}

By Lemmas~\ref{lem:concurrence} and~\ref{lem:slope}, and by smoothness of $\mathcal E$ near $C=\frac{1}{2}$,
\begin{equation}\label{eq:drop}
  E_F(\rho)-E_F(\rho_t)
  =\mathcal E\left(\frac{1}{2}\right)-\mathcal E\left(\frac{1}{2}-\sqrt{2t}+O(t)\right)
  =\kappa\sqrt2\,\sqrt t+O(t),
\end{equation}
where $\kappa\sqrt2\approx1.07$. This implies (\ref{eq:lin}) and proves the nonexistence of supporting affine functional for the EoF at the state $\rho$. $\Box$\medskip

\begin{corollary}\label{cex-c} In a bipartite quantum system $AB$ of any dimension there exist degenerate states at which the EoF has no supporting affine functional.
\end{corollary}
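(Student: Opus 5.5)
The plan is to embed the two-qubit counterexample of Proposition \ref{cex} into an arbitrary bipartite system $AB$ with $\dim\HH_A,\dim\HH_B\geq2$. This dimension assumption is implicit: if one of the subsystems is trivial, then $E_F\equiv0$ and the zero functional is supporting. We fix orthonormal vectors $|0\rangle,|1\rangle$ in $\HH_A$ and in $\HH_B$. Let $\mathcal{K}=\mathrm{span}\{|00\rangle,|01\rangle,|10\rangle,|11\rangle\}\cong\mathbb{C}^2\otimes\mathbb{C}^2$ be the corresponding subspace of $\HH_{AB}$, and take $\rho$ as in (\ref{eq:omega}) and $\sigma=|10\rangle\langle10|$. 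The state $\rho$ has rank $2$, so it is degenerate whenever $\dim\HH_{AB}\geq 3$, in particular always here.

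The key observation is that $E_F$ of a state supported in $\mathcal K=\mathcal K_A\otimes\mathcal K_B$ (with $\mathcal K_A=\mathrm{span}\{|0\rangle,|1\rangle\}\subseteq\HH_A$, and similarly $\mathcal K_B$) coincides with its EoF computed in the two-qubit system. Indeed, every measure in $\PP(\HH_{AB})$ with barycenter supported in $\mathcal K$ is concentrated on pure states with vectors in $\mathcal K$. Moreover, for such a vector $\psi$ the reduced state $\psi_A$ is the same operator, supported in $\mathcal K_A$, whether it is computed in $\HH_A$ or in $\mathbb{C}^2$. Hence definition (\ref{EF}) gives the same value in both settings. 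The same applies to $\rho_t=(1-t)\rho+t\sigma$, which is supported in $\mathcal K$. Therefore formulas (\ref{eq:Cexact}) and (\ref{eq:drop}) remain valid, and
$$
\frac{E_F(\rho)-E_F(\rho_t)}{\|\rho-\rho_t\|_1}=\frac{E_F(\rho)-E_F(\rho_t)}{t\|\rho-\sigma\|_1}\to+\infty\quad\text{as } t\to0^+ .
$$

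Now we use the implication $\rm(ii)\Rightarrow(i)$ of Theorem \ref{P-3} with $\HH_0=\HH_{AB}$. If there were a supporting functional $\Tr\Lambda_\rho(\cdot)$ with $\Lambda_\rho$ bounded, then (i) would hold with $C=D(\Lambda_\rho)/2<\infty$. That contradicts the divergence above. Applying the same implication with $\HH_0=\mathcal K$ even shows that no supporting functional exists relative to any subspace containing $\mathcal K$.

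The main obstacle is justifying the reduction step carefully in infinite dimensions, namely that the infimum in (\ref{EF-c}) over measures on $\ext\St(\HH_{AB})$ does not decrease when measures outside $\mathcal K$ are allowed. This follows because a measure whose barycenter is supported in $\mathcal K$ must be concentrated on states supported in $\mathcal K$, since $\Tr(I-P_{\mathcal K})\bar\rho(\mu)=\int\Tr(I-P_{\mathcal K})\varrho\,\mu(d\varrho)=0$. The identification of the reduced entropies is then immediate. Since both marginal entropies are finite, the discrete definition (\ref{EF-d}) also agrees, so Wootters' formula applies.
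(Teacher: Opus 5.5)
Your proof is correct and follows the paper's idea: embed the two-qubit counterexample of Proposition \ref{cex} into two-dimensional subspaces of $\HH_A$ and $\HH_B$. The paper gets the contradiction by restricting a hypothetical global supporting functional to $\St(\HH^0_A\otimes\HH^0_B)$, rather than re-running the slope argument in $\HH_{AB}$. It also leaves implicit that the ambient EoF coincides with the two-qubit EoF on that set, which you verify explicitly with the barycenter argument.
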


\emph{Proof.} Let $\HH^0_A$ and $\HH^0_B$ be arbitrary two-dimensional subspaces of  $\HH_A$ and $\HH_B$, respectively. By
Proposition \ref{cex} there is a state $\rho$ in $\St(\HH^0_A\otimes\HH^0_B)$ such that the EoF  has no  supporting functional at $\rho$
as a function on $\St(\HH^0_A\otimes\HH^0_B)$. If we treat $\rho$ as a state in $\St(\HH_A\otimes\HH_B)$ and assume that the EoF  has a supporting functional at $\rho$
as a function on $\St(\HH_A\otimes\HH_B)$ then the restriction of this functional to the convex set $\St(\HH^0_A\otimes\HH^0_B)$  will be a  supporting functional for the EoF (treated as a function on $\St(\HH^0_A\otimes\HH^0_B)$) at the state $\rho$ contradicting the choice of $\rho$. $\Box$

\section{On Lipschitz lower semicontinuity of the EoF}

The implication $\,\rm(ii)\Rightarrow(i)\,$ in Theorem \ref{P-3} gives a way to prove, for given state\break $\rho\in\St(\HH_{AB})$ and subspace $\,\HH_0\subseteq\HH_{AB},$
the inequality of the form
$$
E_F(\rho)-E_F(\sigma)
        \leq C_{\rho,\HH_0}\|\rho-\sigma\|_1\,\quad \forall\sigma\in\St(\mathcal{H}_{0}),
$$
where $C_{\rho,\HH_0}$ is a positive number (as small as possible) depending on $\rho$ and $\HH_0$.
This inequality in case $\HH_0=\HH_{AB}$ (resp. in the case $\HH_0\subset\HH_{AB}$) can be called  unrestricted (resp. $\HH_0$-restricted)  Lipschitz lower semicontinuity bound for the
EoF at a state $\rho$.

Theorem \ref{P-3} reduces the above task to the problem of
achievability of the supremum in the expressions:
\begin{equation}\label{2ex}
    E_F(\rho)
    =
    \sup_{\Lambda\in\mathcal{B}_{\mathrm{sa}}(\mathcal{H}_0)}
    \inf_{\psi\in\mathcal{H}^1_{0}}
    \left\{
        S(\psi_A)
        +
        \operatorname{Tr}\Lambda
        \bigl(\rho-\psi\bigr)
    \right\}=\sup_{\Lambda\in\mathfrak{B}_S}\Tr \Lambda\rho,
\end{equation}
where
$\,\mathfrak{B}_S\doteq\left\{\Lambda \in\B_{\rm sa}(\HH_{0})\,\left|\, \langle\psi|\Lambda|\psi\rangle
        \leq S(\psi_A)\;\, \forall\psi\in\HH^1_{0}\right.\right\}$.\smallskip

Below, we present the results of two types obtained by using Theorem \ref{P-3}.

\begin{proposition}\label{LSC-1} Let $A$ and $B$ be quantum systems of any dimensions. Let $\rho$ be a finite-rank state of the system $AB$ and $\lambda_{\min}^{\rho}$ be the minimal eigenvalue of $\rho$.\smallskip

\emph{A)} If $\,\rank\rho_A=d<+\infty$, then
\[
E_F(\rho)-E_F(\sigma)
\leq
\left[\frac{\log d-E_F(\rho)}
{\lambda_{\min}^{\rho}}\right]
\frac{\|\rho-\sigma\|_1}{2}
\]
for any $\,\sigma\in \St(\HH_{AB})\,$ such that $\,\supp \sigma\subseteq\supp\rho$.\smallskip

\emph{B)} If $\,S(\rho_A)<+\infty,$ then
\[
E_F(\rho)-E_F(\sigma)
\leq
\left[
\frac{S(\rho_A)}
{(\lambda_{\min}^{\rho})^2}
-
\frac{E_F(\rho)}
{\lambda_{\min}^{\rho}}
\right]
\frac{\|\rho-\sigma\|_1}{2}
\]
for any $\,\sigma\in \St(\HH_{AB})\,$ such that $\,\supp \sigma\subseteq\supp\rho$.
\end{proposition}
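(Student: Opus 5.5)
We plan to apply Theorem~\ref{P-3} with $\HH_0=\supp\rho$, in the direction $\rm (ii)\Rightarrow(i)$. Since $\rho$ has finite rank, $\HH_0$ is finite-dimensional, and $\rho$ is a non-degenerate (relative interior) state of $\St(\HH_0)$. By the general convex-analysis facts recalled in the Introduction (the subdifferential is nonempty at relative interior points), there is a \emph{local} supporting functional. This is an operator $\Lambda_\rho\in\Bsa(\HH_0)$ with $\Tr\Lambda_\rho\rho=E_F(\rho)$ and $\Tr\Lambda_\rho\varsigma\le E_F(\varsigma)$ for all $\varsigma\in\St(\HH_0)$; in particular $\langle\psi|\Lambda_\rho|\psi\rangle\le S(\psi_A)$ for all $\psi\in\HH_0^1$. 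We will use that in case B the function $E_F$ restricted to $\St(\HH_0)$ is still the finite convex closure, since all marginal entropies there are bounded by $S(\rho_A)/\lambda_{\min}^\rho$ (see below). By Theorem~\ref{P-3}, both claims then reduce to bounding $D(\Lambda_\rho)$ by the bracketed constant: (i) holds with $C=D(\Lambda_\rho)/2$.

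\emph{Upper end of the spectrum.} For a unit vector $\psi\in\HH_0$ we have $\langle\psi|\Lambda_\rho|\psi\rangle\le S(\psi_A)$. In case A, $\psi_A$ is supported in $\supp\rho_A$, which has dimension $d$, so $S(\psi_A)\le\log d$. In case B, $\rho\ge\lambda|\psi\rangle\langle\psi|$ with $\lambda=\lambda_{\min}^\rho$, so $\rho_A=\lambda\psi_A+(\rho-\lambda\psi)_A$ with both terms positive. Inequality (\ref{S-in}) for the homogeneous entropy then gives $\lambda S(\psi_A)\le S(\rho_A)$. Hence $\max\mathrm{Sp}(\Lambda_\rho)\le\log d$ in case A, and $\max\mathrm{Sp}(\Lambda_\rho)\le S(\rho_A)/\lambda$ in case B.

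\emph{Lower end of the spectrum (the main step).} Fix a unit $\psi\in\HH_0$. If $\lambda=1$, then $\rho$ is pure and $\St(\HH_0)=\{\rho\}$, so the claim is trivial; assume $\lambda<1$. Put $\omega=(\rho-\lambda\psi)/(1-\lambda)\in\St(\HH_0)$. Then
\[
\lambda\langle\psi|\Lambda_\rho|\psi\rangle=E_F(\rho)-(1-\lambda)\Tr\Lambda_\rho\omega\ge E_F(\rho)-(1-\lambda)E_F(\omega),
\]
using the supporting property at $\omega$. In case A we bound $E_F(\omega)\le\log d$ (any pure-state decomposition of $\omega$ lives in $\HH_0$, whose vectors have $A$-marginals of rank $\le d$). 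This gives $\min\mathrm{Sp}(\Lambda_\rho)\ge\bigl(E_F(\rho)-(1-\lambda)\log d\bigr)/\lambda$, hence
\[
D(\Lambda_\rho)\le\log d-\frac{E_F(\rho)-(1-\lambda)\log d}{\lambda}=\frac{\log d-E_F(\rho)}{\lambda},
\]
which is exactly A. In case B we use $(1-\lambda)E_F(\omega)\le S\bigl((\rho-\lambda\psi)_A\bigr)\le S(\rho_A)$, where the last inequality is again (\ref{S-in}). This yields $\min\mathrm{Sp}(\Lambda_\rho)\ge(E_F(\rho)-S(\rho_A))/\lambda$ and
\[
D(\Lambda_\rho)\le\frac{2S(\rho_A)-E_F(\rho)}{\lambda}\le\frac{S(\rho_A)}{\lambda^2}-\frac{E_F(\rho)}{\lambda},
\]
where the final step uses $\lambda\le\tfrac12$ when $\rank\rho\ge2$.

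The step we expect to require most care is justifying the local supporting functional in case B when $\dim\HH_A=\infty$. There we must check that $E_F$ on $\St(\HH_0)$ is a finite convex function on a finite-dimensional simplex-like set, so that the relative-interior subdifferential theorem applies. Finiteness follows from $E_F(\varsigma)\le S(\varsigma_A)\le S(\rho_A)/\lambda$ for $\varsigma\in\St(\HH_0)$, by the same (\ref{S-in}) argument applied to $\rho\ge\lambda\varsigma$. If needed, we can alternatively obtain $\Lambda_\rho$ by the compactness argument of Theorem~\ref{P-3} after first establishing (i) for $\sigma$ close to $\rho$.
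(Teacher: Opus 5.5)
Your proof is correct, but it takes a different route from the paper's.

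\textbf{The paper's route.} The paper deduces both claims from Lemma~\ref{lll} and gets the supporting operator by compactness. It takes the closed set $\B_\delta$ of $\delta$-optimal operators in (\ref{2ex}), which is nonempty by (\ref{VE-0}). It tests each $\Lambda\in\B_\delta$ on the single state $\sigma=\rho+\lambda_{\min}^{\rho}(|\alpha\rangle\langle\alpha|-|\beta\rangle\langle\beta|)$ and uses the uniform bound $E_F(\sigma)\le C$. This gives $D(\Lambda)\le (C-E_F(\rho)+\delta)/\lambda_{\min}^{\rho}$, so $\B_\delta$ is bounded and hence compact. The supremum is therefore attained, and the same estimate bounds $D(\Lambda_\rho)$.

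\textbf{Your route.} You get existence from the relative-interior subdifferential theorem. You correctly check first that hypotheses A and B make $E_F$ finite and convex on the finite-dimensional set $\St(\supp\rho)$, where $\rho$ is faithful. You then bound the spectral diameter of any local supporting operator by estimating the top of the spectrum (from the constraint) and the bottom (from $\rho=\lambda\psi+(1-\lambda)\omega$) separately.

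\textbf{How the estimates compare.} The two estimates are closely related: the paper's test state is exactly $\lambda|\alpha\rangle\langle\alpha|+(1-\lambda)\omega_\beta$, with $\omega_\beta$ your $\omega$ built from $\beta$. In case A both give the identical constant $(\log d-E_F(\rho))/\lambda_{\min}^{\rho}$. In case B, splitting lets you use $(1-\lambda)E_F(\omega)\le S(\rho_A)$ in place of the cruder $E_F\le S(\rho_A)/\lambda$. This gives the sharper bound $D(\Lambda_\rho)\le (2S(\rho_A)-E_F(\rho))/\lambda_{\min}^{\rho}$. You then correctly weaken it to the stated constant using $\lambda_{\min}^{\rho}\le 1/2$ for rank at least two, the rank-one case being trivial.

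\textbf{What each approach buys.} The paper's argument is self-contained: it needs only (\ref{VE-0}), not Rockafellar's theorem, and one estimate yields both existence and the bound. Yours cleanly separates existence from the estimate, shows the bound holds for every local supporting operator, and gives a better constant in case B.
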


\noindent
\textit{Proof.} Both claims follow from the implication
$\rm (ii)\Rightarrow(i)$ in Theorem \ref{P-3} with $\,\HH_0=\supp\rho\,$ and Lemma \ref{lll} below.
Indeed, if $\,\rank\rho_A=d<+\infty$, then $\,S(\psi_A)\leq\log d\,$ for all  $\,\psi\in \supp\rho$.
If $S(\rho_A)<+\infty$, then $\,S(\psi)\leq (1/\lambda_{\min}^{\rho})S(\rho_A)\,$ for all  $\,\psi\in \supp\rho\,$ due to inequality (\ref{S-in})
(as $\,\psi\leq (1/\lambda_{\min}^{\rho})\rho$). $\Box$

\begin{lemma}\label{lll}
Let $\rho$ be a finite-rank state of the system $AB$ such that
\[
S(\psi_A)\leq C<+\infty
\qquad \forall\,\psi\in \HH_\rho\doteq\supp\rho.
\]
Then property $\rm(ii)$ in Theorem \ref{P-3} holds with
$\,\HH_0=\HH_\rho\,$ and an operator
$\,\Lambda_\rho\in \B_{\rm sa}(\HH_\rho)$ such that
\[
D(\Lambda_\rho)
\leq
\frac{C-E_F(\rho)}{\lambda_{\min}^\rho}.
\]
\end{lemma}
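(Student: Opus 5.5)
The plan is to take a maximizer $\Lambda_\rho$ in the variational formula (\ref{VE-0}) with $\HH_0=\HH_\rho$, and to bound its spectral diameter by an elementary estimate that compares its extreme eigenvalues with $C$, $E_F(\rho)$ and $\lambda_{\min}^\rho$. The same estimate will be used twice: once to get compactness, hence existence of the maximizer, and once for the final bound. Throughout, $\HH_\rho$ is finite-dimensional (as $\rank\rho<+\infty$), $\rho$ is nondegenerate on $\HH_\rho$, and every unit vector $\varphi\in\HH_\rho$ satisfies $\langle\varphi|\rho|\varphi\rangle\geq\lambda_{\min}^\rho$.

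\emph{Key estimate.} Let $\Lambda\in\Bsa(\HH_\rho)$ satisfy $\langle\psi|\Lambda|\psi\rangle\le S(\psi_A)$ for all $\psi\in\HH^1_\rho$. Write $\lambda_{\max},\lambda_{\min}$ for its extreme eigenvalues, with unit eigenvectors $\varphi_{\max},\varphi_{\min}$.
\begin{itemize}
\item Taking $\psi=\varphi_{\max}$ in the constraint gives $\lambda_{\max}\le S([\varphi_{\max}]_A)\le C$.
\item Since $\Lambda-\lambda_{\max}I_\rho\le0$ and its spectral decomposition contains the term $(\lambda_{\min}-\lambda_{\max})|\varphi_{\min}\rangle\langle\varphi_{\min}|$, we have $\Lambda-\lambda_{\max}I_\rho\le -D(\Lambda)|\varphi_{\min}\rangle\langle\varphi_{\min}|$.
\item Taking the trace against $\rho\ge0$ yields
\[
\Tr\Lambda\rho\le\lambda_{\max}-D(\Lambda)\,\langle\varphi_{\min}|\rho|\varphi_{\min}\rangle\le \lambda_{\max}-D(\Lambda)\,\lambda_{\min}^\rho .
\]
\end{itemize}
Hence $D(\Lambda)\le(C-\Tr\Lambda\rho)/\lambda_{\min}^\rho$.

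\emph{Existence of a maximizer.} Take a maximizing sequence $\Lambda_n$ in the set $\mathfrak{B}_S$ (with $\HH_0=\HH_\rho$), so that $\Tr\Lambda_n\rho\to E_F(\rho)$ by (\ref{VE-0}); we may assume $\Tr\Lambda_n\rho\ge E_F(\rho)-1$. Then:
\begin{itemize}
\item $\lambda_{\max}(\Lambda_n)\le C$ by the key estimate, and $\lambda_{\max}(\Lambda_n)\ge\Tr\Lambda_n\rho\ge E_F(\rho)-1$ because $\rho$ is a state;
\item $D(\Lambda_n)\le (C-E_F(\rho)+1)/\lambda_{\min}^\rho$.
\end{itemize}
Together these show that the sequence is bounded in norm. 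Because $\dim\HH_\rho<\infty$, we can extract a convergent subsequence $\Lambda_n\to\Lambda_\rho$. The constraint set $\mathfrak{B}_S$ is closed, being an intersection of closed half-spaces indexed by $\psi$, so $\Lambda_\rho\in\mathfrak{B}_S$. By continuity of $\Lambda\mapsto\Tr\Lambda\rho$, we get $\Tr\Lambda_\rho\rho=E_F(\rho)$. (Alternatively, attainment follows from the nondegeneracy of $\rho$ on $\HH_\rho$ and the convex-analysis remark in the Introduction, but the direct compactness argument also covers the bound.)

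\emph{Conclusion.} Applying the key estimate to $\Lambda_\rho$ with $\Tr\Lambda_\rho\rho=E_F(\rho)$ gives $D(\Lambda_\rho)\le(C-E_F(\rho))/\lambda_{\min}^\rho$. Together with the constraint $\Lambda_\rho\in\mathfrak{B}_S$, this is exactly property (ii) of Theorem \ref{P-3} with $\HH_0=\HH_\rho$. The only delicate point is the existence of the maximizer, i.e. compactness. It is handled by the lower bound on $\lambda_{\max}$ coming from $\Tr\Lambda\rho\le\lambda_{\max}$, combined with the diameter bound. No shift by multiples of $I_\rho$ is needed.
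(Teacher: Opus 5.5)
Your proof is correct and follows essentially the paper's route: a spectral-diameter bound for near-optimal $\Lambda$ gives compactness, a maximizer is extracted, and the same bound is then applied at the maximizer. Your operator inequality is the paper's perturbed-state estimate in another form: for $\sigma=\rho+\lambda_{\min}^\rho(|\varphi_{\max}\rangle\langle\varphi_{\max}|-|\varphi_{\min}\rangle\langle\varphi_{\min}|)$ it reads $\Tr\Lambda\sigma=\Tr\Lambda\rho+\lambda_{\min}^\rho D(\Lambda)\le\lambda_{\max}\le C$, whereas the paper bounds $\Tr\Lambda\sigma\le E_F(\sigma)\le C$ (and you additionally make explicit the lower bound $\lambda_{\max}\ge\Tr\Lambda\rho$ needed for norm-boundedness, which the paper leaves implicit).
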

\bigskip

\textit{Proof.} Take $\delta>0$ and consider the closed set $\B_\delta$ of Hermitian
operators $\Lambda$ on $\HH_\rho$ such that
\begin{equation}\label{str}
\Tr\Lambda\rho
\in
[E_F(\rho)-\delta,E_F(\rho)]
\quad \textrm{and}\quad
\langle\psi|\Lambda|\psi\rangle\leq S(\psi_A),
\qquad \forall\,\psi\in \HH^1_\rho.
\end{equation}

For a given $\Lambda\in \B_\delta$ consider the state
$\sigma
=
\rho+
\lambda_{\min}^{\rho}
\bigl(
|\alpha\rangle\langle\alpha|
-
|\beta\rangle\langle\beta|
\bigr),
$
where $\alpha$ (resp. $\beta$) is the unit eigenvector of $\Lambda$
corresponding to the maximal (resp. minimal) eigenvalue of $\Lambda$.
Then the properties in  (\ref{str}) imply $\,E_F(\sigma)\geq \Tr\Lambda \sigma\,$
and $\,E_F(\rho)\leq\Tr\Lambda\rho+\delta$.
Hence
$$
E_F(\sigma)-E_F(\rho)
\geq
\lambda_{\min}^{\rho}
\Tr\Lambda
\bigl(
|\alpha\rangle\langle\alpha|
-
|\beta\rangle\langle\beta|
\bigr)-\delta
=
\lambda_{\min}^{\rho}D(\Lambda)-\delta.
$$

So,
\[
D(\Lambda)
\leq
\frac{E_F(\sigma)-E_F(\rho)+\delta}
{\lambda_{\min}^{\rho}}
\leq
\frac{C-E_F(\rho)+\delta}
{\lambda_{\min}^{\rho}}.
\]

Since the second property in (\ref{str}) shows that
\[
\langle\alpha|\Lambda|\alpha\rangle
=\max \mathrm{Sp}(\Lambda)\leq C\qquad \forall \Lambda\in\B_\delta,
\]
the set $\B_\delta$ lies within some ball in
$\B_{\rm sa}(\HH_\rho)$ and hence it is compact.

This implies that the second supremum in (\ref{2ex}) is attained at some operator
$\Lambda_\rho$ in $\B_S\cap\B_0$, i.e. property (ii) in Theorem \ref{P-3} with
$\HH_0=\HH_\rho$ holds for the state $\rho$ and this operator $\Lambda_\rho$.

\begin{proposition}\label{LSC-2} Let $A$ and $B$ be quantum systems of any dimensions. Let $\rho$ be a finite-rank state of the system $AB$ such that
$\,\HH_\rho\doteq\supp\rho=\supp\rho_A\otimes\supp\rho_B\,$ and $\,\lambda_{\min}^{\rho}\,$ be the minimal eigenvalue of $\rho$.\smallskip

\emph{A)} If $\,\rank\rho_A=d<+\infty$, then
\[
E_F(\rho)-E_F(\sigma)
\leq
\left[\frac{\log d-E_F(\rho)}
{\lambda_{\min}^{\rho}}\right]
\frac{\|\rho-\sigma\|_1}{2}
\]
for any $\,\sigma\in \St(\HH_{AB})$.\smallskip

\emph{B)} If $\,S(\rho_A)<+\infty,$ then
\[
E_F(\rho)-E_F(\sigma)
\leq
\left[
\frac{S(\rho_A)}
{(\lambda_{\min}^{\rho})^2}
-
\frac{E_F(\rho)}
{\lambda_{\min}^{\rho}}
\right]
\frac{\|\rho-\sigma\|_1}{2},
\]
for any $\,\sigma\in \St(\HH_{AB})$.
\end{proposition}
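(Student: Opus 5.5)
The plan is to reduce Proposition \ref{LSC-2} to Proposition \ref{LSC-1}. The new hypothesis $\HH_\rho=\supp\rho_A\otimes\supp\rho_B$ should let us pass from the restricted bound, valid only for $\supp\sigma\subseteq\supp\rho$, to the unrestricted one. We start from Lemma \ref{lll} with $\HH_0=\HH_\rho$. It gives a local supporting operator $\Lambda_\rho\in\B_{\rm sa}(\HH_\rho)$ with $\Tr\Lambda_\rho\rho=E_F(\rho)$, with $\langle\psi|\Lambda_\rho|\psi\rangle\le S(\psi_A)$ for all $\psi\in\HH^1_\rho$, and with $D(\Lambda_\rho)$ bounded by the constant in A) or B); the constants are obtained exactly as in the proof of Proposition \ref{LSC-1}. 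By the implication $\rm(ii)\Rightarrow(i)$ of Theorem \ref{P-3} with $\HH_0=\HH_{AB}$, it then suffices to extend $\Lambda_\rho$ to an operator $\hat\Lambda\in\Bsa(\HH_{AB})$ that satisfies three conditions: $D(\hat\Lambda)=D(\Lambda_\rho)$, $\Tr\hat\Lambda\rho=E_F(\rho)$, and $\langle\psi|\hat\Lambda|\psi\rangle\le S(\psi_A)$ for every $\psi\in\HH^1_{AB}$.

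For the extension, let $P_A,P_B$ be the projectors onto $\supp\rho_A,\supp\rho_B$, so that $P=P_A\otimes P_B$ is the projector onto $\HH_\rho$. Put $m=\min\mathrm{Sp}(\Lambda_\rho)$ and define $\hat\Lambda=P\Lambda_\rho P+m(I_{AB}-P)$. This operator has the same spectral diameter as $\Lambda_\rho$ and the same value on $\rho$. The key point is the key inequality. Write $\psi=\sqrt{q}\,\psi_0+\sqrt{1-q}\,\psi_1$ with $\psi_0\in\HH_\rho$ and $\psi_1\perp\HH_\rho$, where both vectors are unit. Then $\langle\psi|\hat\Lambda|\psi\rangle=q\langle\psi_0|\Lambda_\rho|\psi_0\rangle+(1-q)m$, and this is at most $q S([\psi_0]_A)+(1-q)m$.

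To control $m$, we first shift $\Lambda_\rho$ down by a constant. This preserves the diameter and the inequality on $\HH_\rho$ but lowers $\Tr\Lambda\rho$, so it cannot be done. Instead we argue that $m\le 0$. If $m>0$, then $\langle\psi|\Lambda_\rho|\psi\rangle>0$ for every $\psi$, which would force $S(\psi_A)>0$ for every $\psi\in\HH_\rho$. But $\HH_\rho$ contains product vectors $\varphi_A\otimes\varphi_B$, by the product structure of the support, and for these $S(\psi_A)=0$. Hence $m\le0$, and the extension has eigenvalue $m\le 0$ off $\HH_\rho$.

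We must then show that $qS([\psi_0]_A)+(1-q)m\le S(\psi_A)$. It suffices to prove $S(\psi_A)\ge qS([\psi_0]_A)$. Here the product structure is used again. Decompose $I-P=(I-P_A)\otimes I+P_A\otimes(I-P_B)$ and use monotonicity properties of the entropy of the marginal under the local ``pinching'' $X\mapsto P_AXP_A+(I-P_A)X(I-P_A)$ on $A$. The second piece, in which $B$ lies outside $\supp\rho_B$, is handled by the symmetric statement $S(\psi_A)=S(\psi_B)$ together with a pinching on $B$. The pinching on $A$ does not decrease entropy, and the reduced state of the pinched vector is a mixture in which the $\HH_\rho$-component enters with weight $q$. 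Concavity in the form (\ref{S-in}) then yields $S(\psi_A)\ge qS([\psi_0]_A)$. This step is the main obstacle: the cross terms between $\psi_0$ and the component in $\supp\rho_A\otimes(\supp\rho_B)^\perp$ are not removed by a pinching on $A$ alone. We therefore expect to handle them by combining the $A$-pinching with the identity $S(\psi_A)=S(\psi_B)$ and a $B$-pinching, ordering the two pinchings suitably.

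If this inequality fails in general, the fallback is to choose the constant on $(I-P)$ even lower, while keeping the diameter controlled by the same bound. Once the key inequality holds, Theorem \ref{P-3} gives both A) and B).
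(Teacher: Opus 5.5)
Your reduction is sound and parallels the paper's. You take $\Lambda_\rho$ from Lemma \ref{lll} on $\HH_\rho$ and extend it block-diagonally with respect to $P=P_A\otimes P_B$. The paper uses $P\Lambda_\rho P$, i.e.\ the constant $0$ off $\HH_\rho$. Your constant $m$ makes $D(\hat\Lambda)=D(\Lambda_\rho)$ immediate, and your product-vector argument for $m\le 0$ is correct. You then reduce everything to
\[
S(\psi_A)\ \ge\ qS([\psi_0]_A)=S\bigl([P\psi]_A\bigr)\qquad \forall\,\psi\in\HH^1_{AB},
\]
with $S$ the homogeneous extension (\ref{S-ext}). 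But this inequality is the whole substance of the proposition, and you leave it unproved. The mechanism you sketch also runs in the wrong direction. That pinching does not decrease entropy gives $S(\psi_A)\le S\bigl(P_A\psi_AP_A+(I-P_A)\psi_A(I-P_A)\bigr)$. Concavity then bounds the \emph{pinched} entropy from below by $qS([\psi_0]_A)$, so no lower bound on $S(\psi_A)$ follows. Your fallback does not help either: making the constant on $I-P$ smaller than $m$ increases $D(\hat\Lambda)$ beyond $D(\Lambda_\rho)$ and loses the stated constant.

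The missing ingredient is the reverse fact: compressing by a local projector does not increase the homogeneous entropy, $S(P_A\psi_AP_A)\le S(\psi_A)$. This is Lindblad's lemma from \cite{L-2}, which the paper uses. It is combined with $[P\psi]_A\le P_A\psi_AP_A$ and with monotonicity of $S$ in the operator order, which follows from (\ref{S-in}).

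The cross terms you worry about are harmless. $\Tr_B$ annihilates $(I\otimes P_B)X(I\otimes(I-P_B))$, so $P_A\psi_AP_A=[P\psi]_A+[(P_A\otimes(I-P_B))\psi]_A\ge[P\psi]_A$. The real difficulty is passing from $P_A\psi_AP_A$ to $\psi_A$, whose off-diagonal blocks need not vanish. Here your idea of using $S(\psi_A)=S(\psi_B)$ does work. Put $\psi'=(P_A\otimes I)\psi$ and $\psi''=((I-P_A)\otimes I)\psi$. The partial trace over $A$ kills the cross terms, so $\psi_B=[\psi']_B+[\psi'']_B$, and (\ref{S-in}) gives
\[
S(\psi_A)=S(\psi_B)\ge S([\psi']_B)=S([\psi']_A)=S(P_A\psi_AP_A)\ge S([P\psi]_A).
\]
Here $S([\psi']_B)=S([\psi']_A)$ because $\psi'$ is a single (non-normalized) vector, so its two marginals share their nonzero spectrum. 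With this step supplied your argument is complete; without it, it is not.
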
\smallskip

\emph{Proof.} By the proof of Proposition \ref{LSC-1}  we have
\[
S(\psi_A)\leq C<+\infty
\qquad \forall\,\psi\in \HH_\rho\doteq\supp\rho,
\]
where $C=\log d\,$ in case $A$ and $C=(1/\lambda_{\min}^{\rho})S(\rho_A)\,$ in case $B$.
\medskip

In both cases Lemma \ref{lll} shows the existence of an operator
$\Lambda_\rho\in \B_{\rm sa}(\HH_\rho)$ such that
\begin{equation}\label{fff}
E_F(\rho)=\Tr\Lambda_\rho\rho,
\quad \quad
\langle\psi|\Lambda_\rho|\psi\rangle\leq S(\psi_A)
\qquad \forall\,\psi\in \HH^1_\rho
\end{equation}
and
\begin{equation}\label{fgf}
D(\Lambda_\rho)\leq\frac{C-E_F(\rho)}
{\lambda_{\min}^{\rho}}.
\end{equation}

Note that the minimal eigenvalue of $\Lambda_\rho$ is negative. Indeed, if we
assume that $\Lambda_\rho\geq 0$, then we obtain
\[
E_F(\bar{\rho})\geq \Tr \Lambda_\rho\bar{\rho}>0,
\]
where $\,\bar{\rho}\doteq (1/r)\rho^0,$ $r=\rank \rho$, is the chaotic state in $\HH_\rho$. This is not possible, since $\,E_F(\bar{\rho})=0\,$ (as $\,\bar{\rho}=\bar{\rho}_A\otimes\bar{\rho}_B$), where $\bar{\rho}_X$
is the chaotic state in $\St(\supp\rho_X)$, $X=A,B$.

Consider the operator $\,\Lambda_\rho^{\ext}=P_\rho\Lambda_\rho P_\rho$ on $\HH_{AB}$,
where $P_\rho$ is the projector onto $\HH_\rho$. Note that $P_\rho=Q_\rho^A\otimes Q_\rho^B$, where $Q_\rho^X$ is the projector onto $\supp\rho_X$, $X=A,B$.
It is clear that
\[
\Tr\Lambda_\rho^{\ext}\rho=E_F(\rho).
\]
To show that
\[
\langle\psi|\Lambda_\rho^{\ext}|\psi\rangle\leq S(\psi_A)
\qquad \forall\,\psi\in \HH^1_{AB},
\]
note that
\begin{align*}
[P_\rho\psi]_A
&=
\Tr_B
\bigl(
Q_\rho^A\otimes Q_\rho^B
|\psi\rangle\langle\psi|
Q_\rho^A\otimes Q_\rho^B
\bigr)
\\
&=
Q_\rho^A
\Bigl[
\Tr_B
\bigl(
I_A\otimes Q_\rho^B
|\psi\rangle\langle\psi|
\bigr)
\Bigr]
Q_\rho^A
\\
&\leq
Q_\rho^A
\bigl[
\Tr_B|\psi\rangle\langle\psi|
\bigr]
Q_\rho^A,
\end{align*}
where $I_A$ is the unit operator on $\HH_A$. Hence inequality (\ref{S-in}) and Lemma 3 in \cite{L-2} show that
\[
S\bigl([P_\rho\psi]_A\bigr)
\leq S\bigl(Q_\rho^A
\bigl[
\Tr_B|\psi\rangle\langle\psi|
\bigr]
Q_\rho^A\bigr)
\leq S(\psi_A)
\qquad \forall\,\psi\in \HH^1_{AB},
\]
where $S$ in the l.h.s. is the extension
of the von Neumann entropy defined in (\ref{S-ext}). Thus, we have
\[
\langle\psi|\Lambda_\rho^{\ext}|\psi\rangle
=\langle P_\rho\psi|\Lambda_\rho|P_\rho\psi\rangle
\leq S\bigl([P_\rho\psi]_A\bigr)
\leq S(\psi_A),\qquad \forall\,\psi\in \HH^1_{AB},
\]
where the first inequality follows from the second property in (\ref{fff}).

Thus, property $\rm (ii)$ in Theorem \ref{P-3}  with
$\HH_0=\HH_{AB}$ holds for the state $\rho$ and the operator $\Lambda_\rho^{\ext}$ in $\B_{\rm sa}(\HH_{AB})$.
Since the minimal eigenvalue of $\Lambda_\rho$ is negative, $D(\Lambda^{\ext}_\rho)=D(\Lambda_\rho)$.
Thus, both claims of the proposition follow from the implication $\rm (ii)\Rightarrow(i)$ in Theorem \ref{P-3} and (\ref{fgf}). $\Box$
\medskip

The proof of Proposition \ref{LSC-2} shows how starting from a  local supporting functional for the EoF
at a state from a certain class to construct a global supporting functional for the EoF at this state.


\end{document}